\newcommand{\ensemble}{\mathscr{C}}
\newcommand{\code}{\mathcal{C}}
\newcommand{\vecu}{\boldsymbol{u}}
\newcommand{\vecf}{\boldsymbol{f}}
\newcommand{\vecv}{\boldsymbol{v}}
\newcommand{\vecc}{\boldsymbol{c}}
\newcommand{\vecy}{\boldsymbol{y}}
\newcommand{\F}{\boldsymbol{F}}
\newcommand{\G}{\boldsymbol{G}}
\newcommand{\GK}{\boldsymbol{K}}
\newcommand{\Gnsys}{\G_{\mathsf{i},\mathsf{nsys}}}
\newtheorem{prop}{Proposition}
\newtheorem{remark}{Remark}
\newtheorem{example}{Example}
\newtheorem{definition}{Definition}
\definecolor{lightblue}{rgb}{0,.5,1}
\definecolor{normemph}{rgb}{0,.2,0.6}
\definecolor{supremph}{rgb}{0.6,.2,0.1}
\definecolor{lightpurple}{rgb}{.6,.4,1}
\definecolor{gold}{rgb}{.6,.5,0}
\definecolor{orange}{rgb}{1,0.4,0}
\definecolor{hotpink}{rgb}{1,0,0.5}
\definecolor{newcolor2}{rgb}{.5,.3,.5}
\definecolor{newcolor}{rgb}{0,.3,1}
\definecolor{newcolor3}{rgb}{1,0,.35}
\definecolor{darkgreen1}{rgb}{0, .35, 0}
\definecolor{darkgreen}{rgb}{0, .6, 0}
\definecolor{darkred}{rgb}{.75,0,0}
\definecolor{midgray}{rgb}{.8,0.8,0.8}
\definecolor{darkblue}{rgb}{0,.25,0.6}
\definecolor{lightred}{rgb}{1,0.9,0.9}
\definecolor{lightblue}{rgb}{0.9,0,0.0}
\definecolor{lightpurple}{rgb}{.6,.4,1}
\definecolor{gold}{rgb}{.6,.5,0}
\definecolor{orange}{rgb}{1,0.4,0}
\definecolor{hotpink}{rgb}{1,0,0.5}
\definecolor{darkgreen}{rgb}{0, .6, 0}
\definecolor{darkred}{rgb}{.75,0,0}
\definecolor{darkblue}{rgb}{0,0,0.6}
\definecolor{bgblue}{RGB}{245,243,253}
\definecolor{ttblue}{RGB}{91,194,224}
\definecolor{dark_red}{RGB}{150,0,0}
\definecolor{dark_green}{RGB}{0,150,0}
\definecolor{dark_blue}{RGB}{0,0,150}
\definecolor{dark_pink}{RGB}{80,120,90}
\begin{document}

	\title{Successive Cancellation List Decoding of Product Codes with Reed-Muller Component Codes}
	\author{Mustafa Cemil Co\c{s}kun, \IEEEmembership{Student Member, IEEE}, Thomas Jerkovits, \IEEEmembership{Student Member, IEEE},\\ Gianluigi Liva, \IEEEmembership{Senior Member, IEEE}
		\thanks{This work was supported by Munich Aerospace e.V., grant "Efficient Coding and Modulation for Satellite Links with Severe Delay Constraints".}
		\thanks{Mustafa Cemil Co\c{s}kun, Thomas Jerkovits and Gianluigi Liva are with the Institute of Communications and	Navigation of the German Aerospace Center (DLR), M\"unchner Strasse 20, 82234 We{\ss}ling, Germany (email: \{mustafa.coskun,thomas.jerkovits,gianluigi.liva\}@dlr.de).}
	} 
	
	\maketitle
	
	\begin{abstract}
		This letter proposes successive cancellation list (SCL) decoding of product codes with Reed--Muller (RM) component codes. SCL decoding relies on a product code description based on the $2\times 2$ Hadamard kernel, which enables interpreting the code as an RM subcode. The focus is on a class of product codes considered in wireless communication systems, based on single parity-check and extended Hamming component codes. For short product codes, it is shown that SCL decoding with a moderate list size performs as well as (and, sometimes, outperforms) belief propagation (BP) decoding. Furthermore, by concatenating a short product code with a high-rate outer code, SCL decoding outperforms BP decoding by up to $1.4$ dB.
	\end{abstract}
	\begin{IEEEkeywords}
		Product codes, Reed-Muller codes, polar codes, successive cancellation decoding, list decoding.
	\end{IEEEkeywords}
	
	\section{Introduction}\label{sec:intro}
	
	\IEEEPARstart{P}{roduct} codes \cite{Elias:errorfreecoding54} have gained considerable attention due to their suitability for low-complexity iterative decoding \cite{Tanner,Pyndiah98}. Usually, product codes are constructed as two or three dimensional arrays, where each dimension is encoded by a short algebraic code. This choice allows the use of low-complexity \ac{SISO} \cite{Pyndiah98} or algebraic (e.g., bounded distance) \cite{Abramson,KoetterPC,Haeger2018} decoders for the component codes. \ac{RM} codes \cite{RM1} and their majority logic decoding \cite{reed} were introduced roughly one year after product codes.
	Since then, \ac{RM} codes have been analyzed intensively both from the code structure \cite{KKMPSU17} and decoding \cite{Dumer04,Dumer:List06} points of view. Interest in \ac{RM} codes has recently grown due to their close relationship with polar codes \cite{arikan2009channel,stolte2002rekursive}. It has been shown \cite{KKMPSU17} that they achieve capacity under \ac{MAP} decoding over \ac{BEC}.
	
	Product codes based on \ac{RM} component codes have been considered, e.g., in \cite{Pyndiah98,Chiaraluce} for the case where the component codes are extended Hamming and \ac{SPC} codes. This choice of component codes has been  considered in wireless communication systems (see, e.g., \cite{berrou2005overview,valenti2009channel,IEEE80216}) thanks to the availability of low-complexity \ac{SISO} decoders for \ac{SPC} and extended Hamming codes.\footnote{Extended Hamming codes can be efficiently decoded, for instance, by exploiting their compact trellis representation or by employing the sub-optimum Chase-Pyndiah decoder proposed in \cite{Pyndiah98}.}
	
	In this letter, product codes with \ac{SPC} and/or extended Hamming component codes are considered. The emphasis is on the short and moderate blocklength regimes due to the increasing interest in efficient error correcting codes for short packet transmissions required by emerging applications (see, e.g., \cite{Liva2019:Survey}). Typically, encoding of product codes makes use of the component codes' systematic encoders, while decoding is performed iteratively. An equivalent code can be obtained by using non-systematic encoders for the component codes. In particular, an \ac{RM} product code construction which directly maps the code structure onto the iterated Kronecker product of the $2\times 2$ Hadamard kernel is considered \cite{SA05}. Similar observations were used to increase the throughput of polar codes by reducing the decoding latency/complexity \cite{PA13,BCL19}. In \cite{PA13}, the authors proposed a construction that allows interpreting a polar code as a $2$-dimensional product code. The construction allows using \ac{SC} decoders row- and column-wise to reduce the complexity/latency with respect to the case where \ac{SC} decoding is performed over the larger polar code. Similarly, \cite{BCL19} proposed designing of product codes, where the component codes are polar codes. The focus was on reducing the latency by proposing a two-stage decoding, where the \ac{SC} decoder of the large polar code is used only if the iterative product code decoder does not converge to a valid codeword. 
	In this paper, we make use of this observation to employ standard polar code \ac{SC} and \ac{SCL} decoding algorithms, instead of \ac{BP} decoding, for a class product codes of large practical interest.\footnote{Recently, the \ac{SC} decoding of product codes with \ac{SPC} component codes has been introduced in \cite{coskun17} by using \ac{SPC} kernels.} Simulations on the \ac{B-AWGN} channel show that \ac{SCL}  decoding with small list sizes performs as good as \ac{BP} decoding  for the considered cases.	
	The benefits of \ac{SCL} decoding of product codes extends beyond the potential channel coding gains with respect to \ac{BP} decoding. In fact, \ac{SCL} decoding allows low-complexity decoding of the concatenation of the product code with a high-rate outer  code, as proposed for polar codes in \cite{tal15}. The concatenation provides remarkable gains over the product code alone, and over a \ac{BP} decoder which jointly decodes the outer code and the inner product code. This behaviour is characterized by a weight enumerator analysis of the concatenated product codes, restricted to the minimum distance terms. Another important advantage of list decoding is to reduce the number of pilots for channel estimation when communicating over fading channels with unknown channel state \cite{Coskun:SCC19}.
	
	The work is organized as follows. In Section \ref{sec:prelim}, we provide preliminaries needed for the rest of the work. In Section \ref{sec:spcpc_as_pc}, we revisit the connection between \ac{RM} codes and product codes with \ac{RM} component codes. The concatenation with a high-rate outer code is discussed in Section \ref{sec:concatenation}. Numerical results are provided in Section \ref{sec:numerical_results}. Conclusions follow in Section \ref{sec:conc}.
	
	\section{Preliminaries}\label{sec:prelim}
	
	\subsection{Product Codes}\label{sec:prodcodes}
	
	A $\mu$-dimensional  $(n,k,d)$ product code $\code$  \cite{Elias:errorfreecoding54} is obtained by \emph{iterating} $\mu$ binary linear block codes $\code_1, \code_2, \ldots, \code_\mu$. Let $\code_\ell$ be the $\ell$th component code with parameters $(n_\ell,k_\ell,d_\ell)$, where $n_\ell$, $k_\ell$, and $ d_\ell$ are its blocklength, dimension, and minimum Hamming distance, respectively and $\G_{\ell}$ is the generator matrix. Then, the parameters of the resulting product code are the multiplication of the individual ones. Similarly, its generator matrix is given as
	\begin{equation}
	\label{eq:gen_prod}
	\G = \G_{1} \otimes \G_{2} \otimes \ldots \otimes \G_{\mu}.
	\end{equation}
	
	Although characterizating the distance spectrum of a product code is an elusive problem (with a few exceptions, see \cite{caire}), the minimum distance $d$ and the multiplicity $A_d$ of codewords with weight $d$ can be obtained as
	$d = \prod_{\ell=1}^{\mu} d_\ell$ and  $A_d=\prod_{\ell=1}^\mu A^{(\ell)}_{d_\ell}$,
	where $A^{(\ell)}_w$ is the multiplicity of the codewords having weight of $w$ in the $\ell$th component code.
	Examples of minimum distances and minimum weight multiplicities for some $2$-dimensional product codes based on \ac{SPC} and extended Hamming codes are provided in Table \ref{tab:min_dist}. While achieving relatively large minimum distances, product codes based on these component codes suffer, in general, from large minimum weight multiplicity \cite{Chiaraluce}. This  observation is important to understand the gains attainable by concatenating product codes with high-rate outer codes.
	
	\renewcommand{\arraystretch}{1.15}
	\begin{table}
		\caption{Minimum distances and multiplicities of some 
			product codes (eH = extended Hamming code).\\[-3mm]}
		\begin{center}
			\begin{tabular}{cccc}
				\hline\hline
				$(n,k,d)$ & $\mathcal{C}_1$ & $\mathcal{C}_2$ & $A_d$ \\
				\hline
				$(128,105,4)$ & SPC $(16,15)$ & SPC $(8,7)$ & $3360$  \\
				$(128,77,8)$ & eH $(16,11)$ & SPC $(8,7)$ & $3920$  \\
				$(256,121,16)$ & eH $(16,11)$ & eH $(16,11)$ & $19600$  \\
				$(256,225,4)$ & SPC $(16,15)$ & SPC $(16,15)$ & $14400$ \\
				$(1024,693)$ & SPC $(64,63)$ & eH $(16,11)$ & $282240$  \\
				\hline\hline
			\end{tabular}
		\end{center}
		\label{tab:min_dist}
	\end{table}
	\renewcommand{\arraystretch}{1}
	
	\subsection{Reed-Muller Codes}
	
	\label{sec:RM}
	The construction of an $r$-th order \ac{RM} code of length $n = 2^m$ and dimension $k = 1+{m \choose 1}+{m \choose 2}+\dots+{m \choose r}$, denoted by $\mathrm{RM}(r,m)$ with $0\leq r\leq m$, starts by defining the $n \times n$ Hadamard matrix $\G_{n}=\GK_2^{\otimes m}$ where $\GK_{2}^{\otimes m}$ denotes the $m$-fold Kronecker product of the Hadamard kernel
	\[
	\GK_{2}\triangleq\begin{bmatrix}
	1 & 0 \\ 
	1 & 1
	\end{bmatrix}.
	\]
	The \ac{RM} code generator matrix $\G$ is obtained by removing the rows of $\G_{n}$ with  weight lower than $2^{m-r}$.
	We denote the set containing the indices of discarded rows (frozen bits) as $\mathcal A$ and let $\vecv = (v_1, v_2,\ldots,v_k)$ contain the indices in the complement set $\mathcal A^C$ in an ascending order, i.e., $v_1<v_2<\ldots<v_k$. We define the entry at the $i$th row and $j$th column of the $k\times n$ matrix $\F$ as
	\begin{equation}
	F_{i,j} = \left\{\begin{array}{lll}
	1 &\text{if}\,\,\, j = v_i \\[1mm]
	0 & \text{otherwise.}
	\end{array}
	\right.
	\end{equation}
	It follows that
	\begin{equation}
	\label{eq:gen_RM}
	\G = \F\G_{n}.
	\end{equation}
	The $n$-dimensional frozen bit vector $\vecf$, where $f_i = 0$ for all $i\in \mathcal A$ and $f_i = 1$ otherwise, is also obtained by summing all the rows of $\F$, i.e.,
	\begin{equation}
	\label{eq:frozen_f}
	\vecf = \boldsymbol{e}\F
	\end{equation}
	where $\bm{e}$ is the length-$k$ all-one vector.
	For the encoding, an $n$-bit vector $\vecu=(u_1,u_2,\dots,u_n)$ is defined, where $u_i=0$ for all $i\in \mathcal A$ and the remaining $k$ elements of $\vecu$ consists of information bits. Encoding proceeds as $\vecc=\vecu\G_{n}$.
	
	\section{Reed-Muller Product Codes}\label{sec:spcpc_as_pc}
	
	Consider a product code where the $\ell$th component code $\code_\ell$ is an $\mathrm{RM}(r_\ell,m_\ell)$ code with the corresponding row-selecting matrix $\F_\ell$ (frozen bit vector $\vecf_\ell$). It is a subcode of the $\mathrm{RM}(r_1+r_2+\ldots+r_\mu,m_1+m_2+\ldots+m_\mu)$ code \cite{SA05}. 
	For the $\ell$th component code, the generator matrix is obtained via \eqref{eq:gen_RM} as
	\begin{equation}
	\label{eq:gen_RM_component}
	\G_\ell = \F_\ell\GK_2^{\otimes m_\ell}.
	\end{equation}
	Using the mixed-product property of the Kronecker product, the frozen bit vector for the resulting product code is obtained.
	\begin{prop}\label{prop_one} The generator matrix of the product code obtained by iterating $\mu$ \ac{RM} codes $\mathrm{RM}(r_1,m_1)$, $\mathrm{RM}(r_2,m_2)$, $\ldots$, $\mathrm{RM}(r_\mu,m_\mu)$ is given by
		\begin{equation}\label{eq:fK}
		\G = \F\GK_2^{\otimes (m_1+m_2+\ldots+m_\mu)}
		\end{equation}
		where $\F = \F_1\otimes\F_2\otimes\ldots\otimes\F_\mu$, resulting in
		\begin{equation}
		\label{eq:frozen_set}
		\vecf = \vecf_1\otimes\vecf_2\otimes\ldots\otimes\vecf_\mu.
		\end{equation}
	\end{prop}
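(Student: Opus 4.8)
The plan is to substitute the component generator matrices \eqref{eq:gen_RM_component} into the product-code generator matrix \eqref{eq:gen_prod} and then repeatedly invoke the mixed-product property of the Kronecker product, $(\boldsymbol{A}\boldsymbol{B})\otimes(\boldsymbol{C}\boldsymbol{D})=(\boldsymbol{A}\otimes\boldsymbol{C})(\boldsymbol{B}\otimes\boldsymbol{D})$, to separate the row-selecting factors from the Hadamard factors. Concretely, first I would write
\[
\G=\G_1\otimes\G_2\otimes\cdots\otimes\G_\mu=(\F_1\GK_2^{\otimes m_1})\otimes(\F_2\GK_2^{\otimes m_2})\otimes\cdots\otimes(\F_\mu\GK_2^{\otimes m_\mu}).
\]
Applying the mixed-product property $\mu-1$ times (formally, by induction on $\mu$, noting that at every step the dimensions are compatible since $\F_\ell$ is $k_\ell\times n_\ell$ and $\GK_2^{\otimes m_\ell}$ is $n_\ell\times n_\ell$ with $n_\ell=2^{m_\ell}$) yields
\[
\G=(\F_1\otimes\F_2\otimes\cdots\otimes\F_\mu)\,(\GK_2^{\otimes m_1}\otimes\GK_2^{\otimes m_2}\otimes\cdots\otimes\GK_2^{\otimes m_\mu}).
\]
By associativity of the Kronecker product the second factor is exactly $\GK_2^{\otimes(m_1+m_2+\cdots+m_\mu)}$, which establishes \eqref{eq:fK} with $\F=\F_1\otimes\F_2\otimes\cdots\otimes\F_\mu$. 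A small point to check is that $\F$ still has exactly one unit entry per row (and per column), so it is a legitimate row-selecting matrix in the sense of \eqref{eq:gen_RM}, i.e., it indeed keeps a subset of rows of the larger Hadamard matrix, indexed by the Kronecker-product structure of the complement sets $\mathcal A_\ell^C$.

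For \eqref{eq:frozen_set} I would use \eqref{eq:frozen_f} for each component, $\vecf_\ell=\boldsymbol{e}_{k_\ell}\F_\ell$ with $\boldsymbol{e}_{k_\ell}$ the length-$k_\ell$ all-one vector, together with the factorization of the all-one vector $\boldsymbol{e}_k=\boldsymbol{e}_{k_1}\otimes\boldsymbol{e}_{k_2}\otimes\cdots\otimes\boldsymbol{e}_{k_\mu}$, where $k=k_1k_2\cdots k_\mu$. A second application of the mixed-product property then gives
\[
\vecf=\boldsymbol{e}_k\F=(\boldsymbol{e}_{k_1}\F_1)\otimes(\boldsymbol{e}_{k_2}\F_2)\otimes\cdots\otimes(\boldsymbol{e}_{k_\mu}\F_\mu)=\vecf_1\otimes\vecf_2\otimes\cdots\otimes\vecf_\mu,
\]
as claimed.

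I do not expect a genuine obstacle here: the argument is a direct algebraic manipulation. The only places deserving a line of care are verifying dimension compatibility at each invocation of the mixed-product property (so that the identity is applicable) and the remark above that a Kronecker product of $0$–$1$ matrices each with a single one per row inherits that property, so that $\F$ genuinely plays the role of the row selector in \eqref{eq:gen_RM}. The induction on $\mu$ is routine and, if desired, can be replaced by simply citing the $\mu=2$ case of the mixed-product property stated for arbitrary conformable matrices.
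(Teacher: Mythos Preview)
Your proposal is correct and follows exactly the paper's own approach: substitute \eqref{eq:gen_RM_component} into \eqref{eq:gen_prod} and apply the mixed-product property of the Kronecker product to factor out $\F_1\otimes\cdots\otimes\F_\mu$ from $\GK_2^{\otimes m_1}\otimes\cdots\otimes\GK_2^{\otimes m_\mu}$. Your additional derivation of \eqref{eq:frozen_set} via $\boldsymbol{e}_k=\boldsymbol{e}_{k_1}\otimes\cdots\otimes\boldsymbol{e}_{k_\mu}$ and a second mixed-product application is the natural expansion of what the paper leaves implicit by citing \eqref{eq:frozen_f}.
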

	\begin{proof}\renewcommand{\qedsymbol}{}
		Follows from the application of the mixed-product property to the combination of \eqref{eq:gen_prod}, \eqref{eq:frozen_f} and \eqref{eq:gen_RM_component}:
		\begin{align*}
		(&\F_{1}\GK_2^{\otimes m_1})\otimes(\F_{2}\GK_2^{\otimes m_2})\otimes\dots\otimes(\F_\mu\GK_2^{\otimes m_\mu}) = \\ &(\F_{1}\otimes\F_2\otimes\dots\otimes\F_\mu)(\GK_2^{\otimes m_{1}}\otimes\GK_2^{\otimes m_{2}}\otimes\dots\otimes\GK_2^{\otimes m_\mu}). ~\hfill\ensuremath{\square}
		&
		\end{align*}
	\end{proof}
	\vspace{-7mm}
	\begin{example}
		Consider a two-dimensional product code with a $(2,1)$ repetition code and a $(4,3)$ \ac{SPC} code as component codes with $\F_{1}=\begin{bmatrix}
		0 & 1
		\end{bmatrix}$ yielding $\vecf_{1}=\begin{bmatrix}
		0 & 1
		\end{bmatrix}$ and 
		\[
		\F_{2}=\begin{bmatrix}
		0 & 1 & 0 & 0 \\ 
		0 & 0 & 1 & 0 \\
		0 & 0 & 0 & 1
		\end{bmatrix}
		\]
		yielding $\vecf_{2}=\begin{bmatrix}
		0 & 1 & 1 & 1
		\end{bmatrix}$. Then, the product code generator matrix is obtained via \eqref{eq:fK} where
		\[
		\F=\begin{bmatrix}
		0 & 0 & 0 & 0 & 0 & 1 & 0 & 0 \\ 
		0 & 0 & 0 & 0 & 0 & 0 & 1 & 0 \\
		0 & 0 & 0 & 0 & 0 & 0 & 0 & 1
		\end{bmatrix}
		\]
		with $\vecf=\begin{bmatrix}
		0 & 0 & 0 & 0 & 0 & 1 & 1 & 1
		\end{bmatrix}$ via \eqref{eq:frozen_set}.
	\end{example}
	Proposition $1$ provides an interpretation of product codes with \ac{RM} component codes as \ac{RM} subcodes, where the frozen bit positions are given as \eqref{eq:frozen_set}. This enables the use of the \ac{SC} and \ac{SCL} decoding algorithms derived for \ac{RM} and polar codes \cite{stolte2002rekursive,arikan2009channel,tal15} to decode this class of product codes.
	
	\begin{remark}
		Equivalent codes, defined by different frozen bit vectors, can be obtained as the Kronecker product is not commutative. The definition of the different frozen bits vectors is related to the order with which the component codes generator matrices are iterated in \eqref{eq:gen_prod}. 
	\end{remark}
	
	\subsection{Successive Cancellation (List) Decoding}
	
	As for polar codes \cite{arikan2009channel}, \ac{SC} decoding for the construction illustrated above estimates bit $u_i$, $i = 1,2,\dots,n$, by using the channel observation $\vecy=(y_1,y_2,\ldots,y_n)$ and the previous decisions $\hat{u}_1,\hat{u}_2,\ldots,\hat{u}_{i-1}$, taking into account the code constraints imposed by the set $\mathcal A$, i.e., $\hat{u}_i$ is set to zero if $i\in\mathcal A$. The decoding equations are defined by $\GK_{2}$ and used recursively to compute the soft-information for the bits. In \ac{SCL} decoding \cite{tal15}, two hypothesis are kept open for each bit $u_i$ if it is not a frozen bit. Whenever the number of the hypothesis exceeds a given maximum list size $L$, they are pruned by keeping the most likely ones according to the computed metrics. At the final stage, the decoder outputs the most likely candidate as the message estimate.
	
	\begin{remark}
		For polar codes, it was shown in \cite{tal15} that a performance close to the one of an \ac{ML} decoder can be attained with a sufficiently large list size. This was demonstrated by computing a numerical lower bound on the \ac{ML} decoding error probability via Monte Carlo simulation, where the correct codeword is introduced artificially in the final list, prior to the final selection. If for a specific list size $L$ the simulated error probability is close to the numerical \ac{ML} decoding lower bound, then increasing the list size $L$ would not yield visible improvement. We shall see in Section \ref{sec:numerical_results} that the same principle applies to \ac{SCL} decoding of product codes.
	\end{remark}
	
	\section{Concatenation with a High-Rate Outer Code}
	\label{sec:concatenation}
	Following \cite{tal15}, we analyze the performance under \ac{SCL} decoding of product codes based on \ac{SPC} and extended Hamming component codes, in concatenation with a high-rate outer code. The outer code is used to test each codeword in the final list when \ac{SCL} decoding is used. Among the survivors, the most likely one is chosen as the final decision. A reason to analyze such concatenation lies (besides in the obviously expected performance improvement) in the fact that actual schemes employing product codes may make use of an error detection code to protect the product code information message. We may hence consider sacrificing (part of) the error detection capability for a larger coding gain. Following the construction adopted in the IEEE 802.16 standard \cite{IEEE80216}, we consider product codes with systematic encoding.
	
	For product codes, large gains are expected by adding a high-rate outer code, especially at moderate-low error rates. This follows from the fact that product codes with \ac{RM} component codes are characterized by a fairly large multiplicity of minimum weight codewords, as already observed in Section \ref{sec:prodcodes}. When using a product code based on \ac{RM} component codes to transmit over a memory-less binary-input output-symmetric channel with Bhattacharyya parameter $\beta$, the block error probability under \ac{BP} decoding can be well approximated by the \ac{ML} decoding \ac{TUB}
	\begin{equation}
	P_B \simeq A_d \beta^d \label{eq:bat}
	\end{equation} 
	already at moderate error rates \cite{Chiaraluce}. Recalling Remark 2, the error probability under \ac{SCL} decoding is thus limited by the \ac{ML} decoding performance already at moderate error rates.
	By a suitable choice of the outer code, the multiplicity of minimum weight codewords may be considerably lowered, hence, reducing its contribution to the overall block error probability under \ac{ML} decoding. This may potentially yield remarkable gains also under sub-optimum \ac{BP}/\ac{SCL} decoding.
	We analyze the impact of the outer code in a concatenated ensemble setting from a weight distribution perspective by focusing on the minimum weight terms only.
	
	\subsection{Average Weight Distribution of Concatenated Ensembles}
	
	We consider the concatenation of an $(n_{\mathsf i},k_{\mathsf i})$ inner product code $\code_{\mathsf{i}}$ with an $(n_{\mathsf o},k_{\mathsf o})$ high-rate outer code $\code_{\mathsf{o}}$. Note that $k_{\mathsf i}=n_{\mathsf o}$. We denote by $d$ the minimum distance of the inner product code. We further define the generator matrices of $\code_{\mathsf{i}}$ and $\code_{\mathsf{o}}$ as $\G_{\mathsf{i}}$ and $\G_{\mathsf{o}}$, respectively.
	\begin{definition}[Concatenated Ensemble] The (serially) concateneted ensemble $\ensemble\left(\code_{\mathsf{o}},\code_{\mathsf{i}}\right)$ is the set  of all codes with generator matrix of the form
		$
		\G=\G_{\mathsf{o}}\bm{\Pi}\G_{\mathsf{i}}
		$,
		where $\bm\Pi$ is an $n_{\mathsf o}\times n_{\mathsf o}$ permutation matrix.
	\end{definition}
	Denote the outer code weight enumerator by $A^{\mathsf o}_j$. The minimum-weight input-output weight enumerator of the inner product code is given by $A^{\mathsf i}_{j,d}$.
	The expected number of weight-$d$ codewords for a code drawn randomly from $\ensemble\left(\code_{\mathsf{o}},\code_{\mathsf{i}}\right)$ is
	\begin{equation}
	\bar{A}_d=\sum_{j=1}^{n_{\mathsf o}} \frac{A^{\mathsf o}_j A^{\mathsf i}_{j,d}}{{n_{\mathsf o} \choose j}}.\label{eq:WEF}
	\end{equation}
	The expected multiplicity of weight-$d$ codewords $\bar{A}_d$ can be used in \eqref{eq:bat} to obtain an estimate of the ensemble average error probability in the low error probability regime.
	If $\G_{\mathsf{i}}$ is in systematic form, $A^{\mathsf i}_{j,d}$ is easily computed from the input-output weight enumerators of the component codes \cite[Thm.~1]{El-Khamy05}.
	
	\begin{example}Consider the $(128,77)$ systematic product code with $(16,11)$ extended Hamming and $(8,7)$ \ac{SPC} component codes, which has minimum distance $8$ with a multiplicity of $3920$. The code is concatenated with an outer \ac{CRC}$-7$ code with generator polynomial $g(x)=x^7 + x^3 + 1$. The resulting code is a member of a concatenated ensemble with an expected number of weight-$8$ codewords given by $\bar{A}_8\simeq 26.4$, i.e., the multiplicity of weight-$8$ codewords is reduced, on average, by two orders of magnitude. The contribution of these codewords to the ensemble average error probability is reduced significantly. Hence, the \ac{TUB} shall be approached only at low error rates.
	\end{example}
	
	Note that the generator matrix of the product code constructed according to Proposition \ref{prop_one} is not in systematic form. Assume the inner code generator matrix $\G_{\mathsf{i}}$ to be systematic. The overall code generator matrix can be written as 
	\begin{align*}
	\G=\G_{\mathsf{o}}\bm{\Pi}\G_{\mathsf{i}}
	=\G_{\mathsf{o}}\bm{\Pi}\bm{S}\Gnsys
	=\G_{\mathsf{mo}}\Gnsys
	\end{align*}	
	where $\bm{\Pi}$ is the interleaver matrix, $\bm{S}$ is a $k_{\mathsf i} \times k_{\mathsf i}$ non-singular matrix and $\Gnsys=\bm{S}^{-1}\G_{\mathsf{i}}$ is the non-systematic generator matrix according to Proposition \ref{prop_one}. Furthermore, $\G_{\mathsf{mo}}$ is defined to be the product $\G_{\mathsf{o}}\bm{\Pi}\bm{S}$. Thus, the SCL decoding can be used for the inner product code, where the modified outer code with generator matrix $\G_{\mathsf{mo}}$ is used to test the codewords of the final list prior to a decision.
	
	\section{Numerical Results}
	\label{sec:numerical_results}
	
	We provide simulation results for two product codes, over the \ac{B-AWGN} channel. The results are provided in terms of \ac{CER} vs. \ac{SNR}, where the \ac{SNR} is expressed as $E_b/N_0$ ratio ($E_b$ is here the energy per information bit, and $N_0$ the single-sided noise power spectral density.) For both codes, the \ac{SCL} decoding performance is compared to Gallager's \ac{RCB}~\cite{gallager65-01a} and the \ac{RCUB} from \cite[Thm.~16]{Polyanskiy10:BOUNDS}. As a reference, the performance under \ac{BP} is provided with a maximum number of iterations set to $100$. The component codes are decoded by \ac{MAP} \ac{SISO} decoding over the component code trellis.  For both product codes, the concatenation with a high-rate outer code is also considered. For the short construction,  \ac{BP} decoding of the concatenated scheme is also provided, where the product code Tanner graph is modified by adding a check node representing the outer code constraints (as for the component codes, the outer code is decoded within the node by a \ac{MAP} \ac{SISO} decoder). The \ac{TUB} in the tighter form of \cite[Eq. 3]{Chiaraluce} is also provided.
	
	The first product code is the $(128,77)$ code from Example 1, whose performance is depicted in Figure 1a. In particular, the component codes $\code_1$ and $\code_2$ are $(16,11)$ extended Hamming and $(8,7)$ \ac{SPC} codes, respectively. List decoding with $L=4$ is sufficient to approach the performance of \ac{BP} decoding. With $L=8$, the \ac{SCL} decoder tightly matches the \ac{ML} lower bound below \ac{CER} of $10^{-2}$. The gap to the \ac{RCUB}  is limited to $2$ dB at \ac{CER} of $10^{-6}$. In the same figure, the performance of a $(128,77)$ polar code under \ac{SCL} decoding is provided. While the \ac{ML} decoding performance of the polar code and of the product code are very close, the polar code requires a smaller list size to approach the \ac{ML} lower bound. Figure 1b shows the performance by concatenating the $(128,77)$ product code with an outer \ac{CRC} code with generator polynomial $g(x)=x^7 + x^3 + 1$, leading to a $(128,70)$ code. The performance of the concatenated scheme is provided for two interleavers between the inner and outer code. The label ``no interleaver'' denotes the trivial interleaver, i.e., $\bm{\Pi}$ is chosen to be the $ k_{\mathsf i} \times k_{\mathsf i}$ identity matrix, while in the second case a random interleaver is used. The concatenation with the trivial interleaver performs remarkably well under \ac{SCL} decoding. At  a \ac{CER} of $10^{-6}$, \ac{SCL} decoding of the concatenated code achieves gains up to $1.4$ dB over the original product code. The gains attained by \ac{SCL} decoding over \ac{BP} decoding range from $1$ dB at a \ac{CER} of $10^{-2}$ to $1.4$ dB at a \ac{CER} $\approx10^{-5}$. The gap to the \ac{RCUB} is $0.5$ dB at a \ac{CER} $\approx10^{-7}$. In this specific case, the omission of an interleaving stage yields a code performing better than the ensemble average. For sake of completeness, the performance of a concatenation employing a randomly generated interleaver is provided. The result tightly approaches, in this case, the expected ensemble performance approximated by the TUB.
	
	\begin{figure}    
		\centering
		\begin{subfigure}{0.85\columnwidth}
			\includegraphics[width=\columnwidth]{./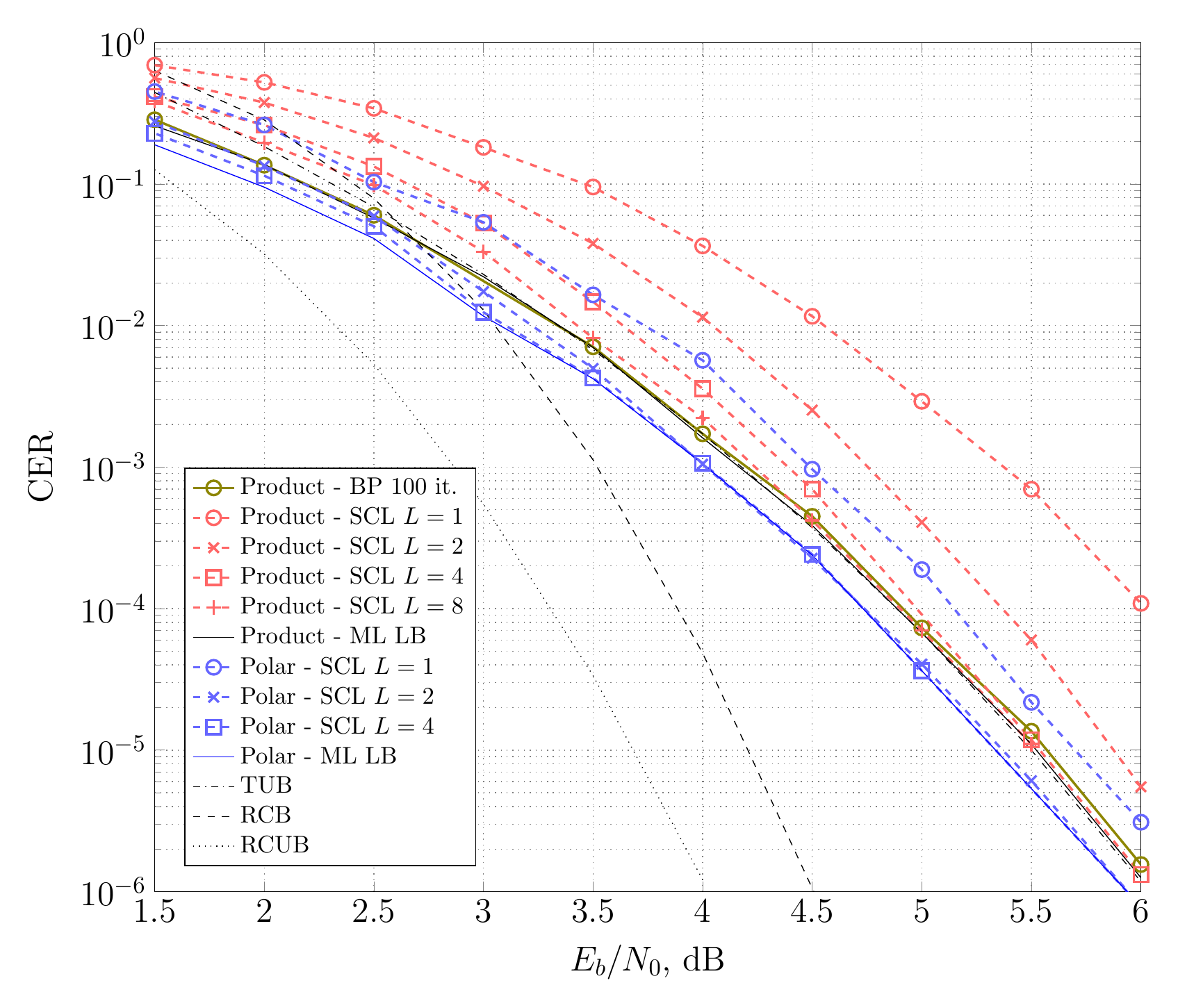}\\[-5.5mm]
			\caption{$(128,77)$ product code}
			\label{fig:scl_128_77_RM}
		\end{subfigure}
		\begin{subfigure}{0.85\columnwidth}
			\includegraphics[width=\columnwidth]{./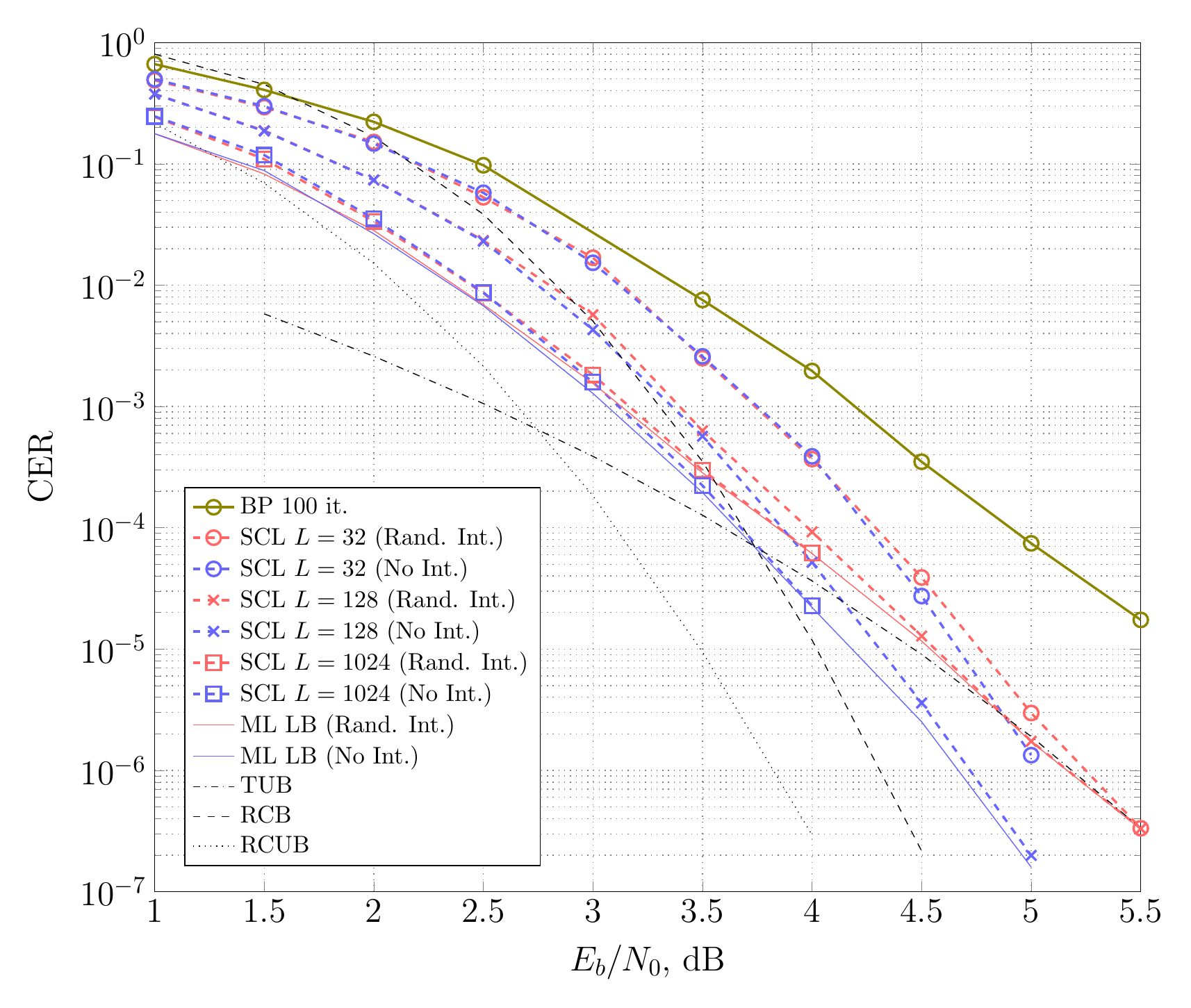}\\[-5.5mm]
			\caption{$(128,70)$ concatenated code}
			\label{fig:scl_128_70_RM}
		\end{subfigure}
		\caption{\ac{CER} vs. \ac{SNR}  under \ac{SCL} decoding with various list sizes for the (a) $(128,77)$ product and polar codes, and (b) $(128,70)$ concatenated product codes, compared with \ac{BP} decoding.}\label{fig:combined}
	\end{figure}
	
	A longer $(1024,693)$ product code has been constructed by choosing $\code_1$ and $\code_2$ to be the $(16,11)$ extended Hamming and the $(64,63)$ \ac{SPC} codes, respectively. The results of the product codes with\footnote{For this case, the performance under \ac{BP} is not provided. The reason is that the addition of the outer code check node in the product code Tanner graph resulted in large performance degradation, due to the emergence of a number of small trapping sets for the \ac{BP} decoder.} and without outer codes are provided in Figure 2. The outer \ac{CRC} code has a generator polynomial $g(x)=x^{10} + x^9 + x^5 + x^4 + x + 1$, leading to a $(1024,683)$ code. The concatenated schemes needs a larger list than the one required by the product code alone to approach the \ac{ML} lower bound, especially at high error rates. In particular,  for long blocklengths the required list size increases due to the sub-optimal choice of a large number of non-frozen bits enforced by the specific product code construction.
	
	\begin{figure}    
		\centering
		\begin{subfigure}{0.85\columnwidth}
			\includegraphics[width=\columnwidth]{./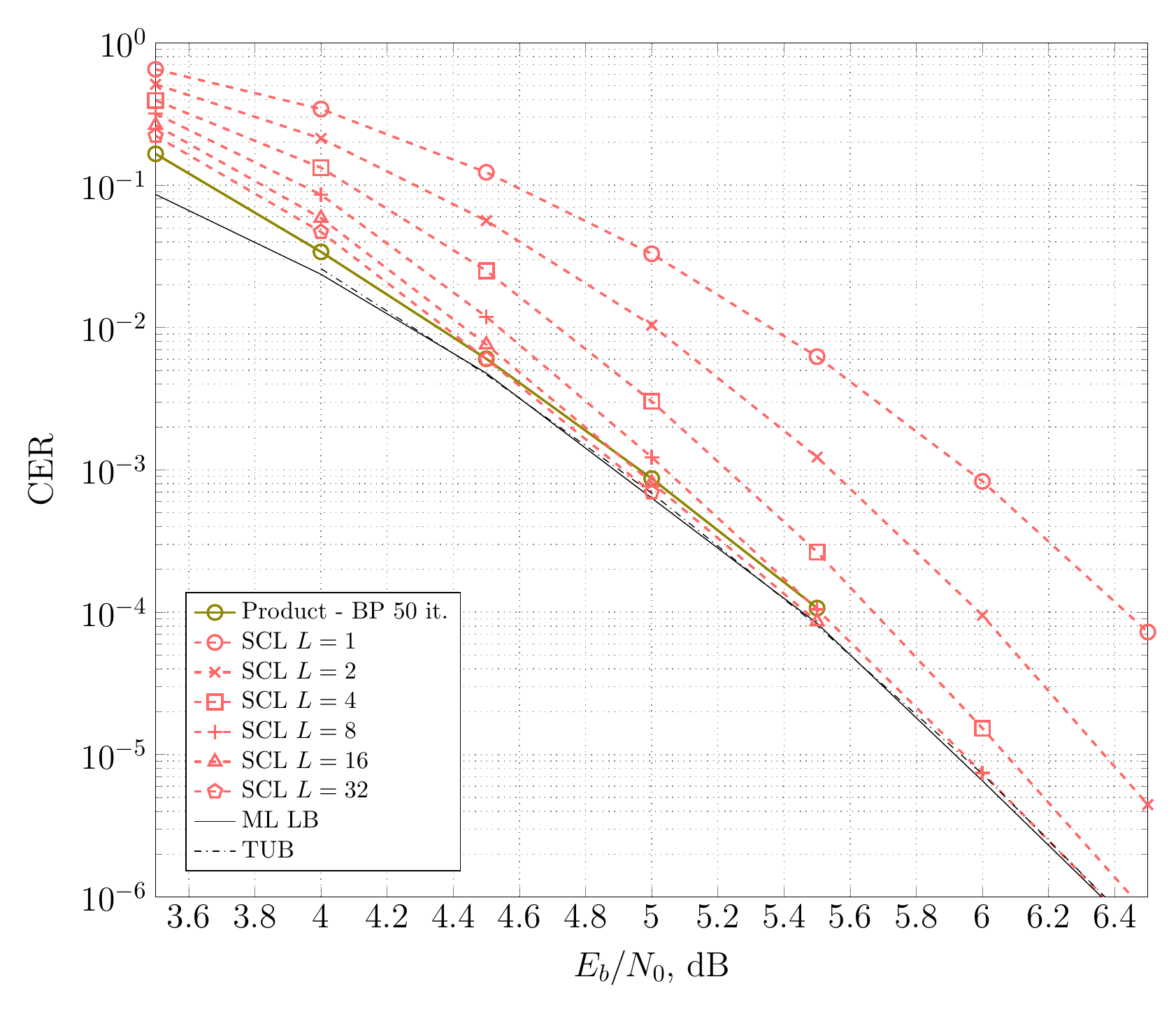}\\[-5.5mm]
			\caption{$(1024,693)$ product code}
			\label{fig:scl_1024_693_RM}
		\end{subfigure}
		\begin{subfigure}{0.85\columnwidth}
			\includegraphics[width=\columnwidth]{./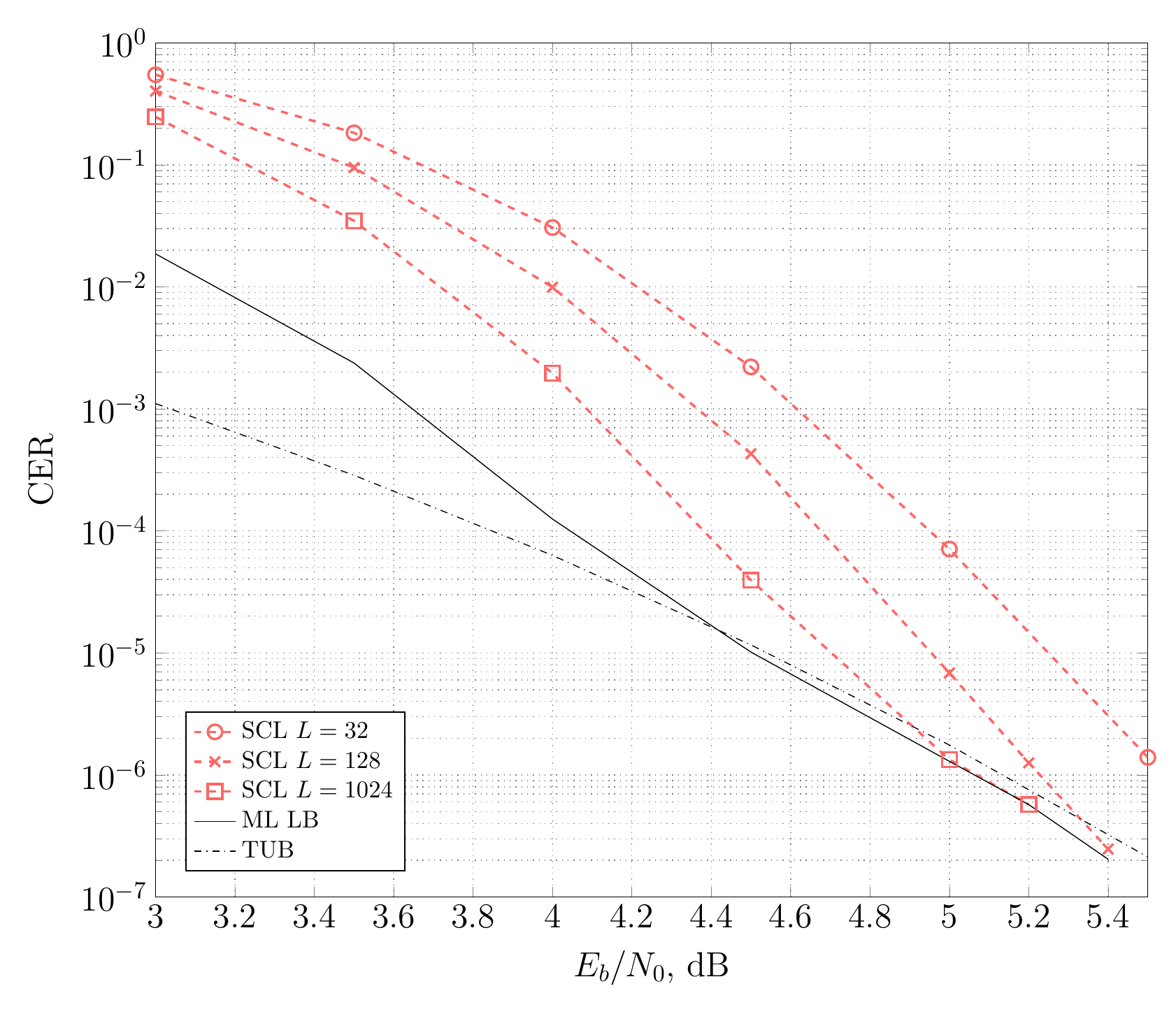}\\[-5.5mm]
			\caption{$(1024,683)$ concatenated code}
			\label{fig:scl_1024_683_RM}
		\end{subfigure}
		\caption{\ac{CER} vs. \ac{SNR}  under \ac{SCL} decoding with various list sizes for (a) $(1024,693)$ product code and (b) $(1024,683)$ concatenated one.}\label{fig:combined2}
	\end{figure}

	\section{Conclusions}
	\label{sec:conc}
	
	Successive cancellation list (SCL) decoding of product codes with single parity-check and extended Hamming component codes has been investigated. SCL decoding relies on a product code description based on the $2\times 2$ Hadamard kernel, which enables interpreting the code as a Reed-Muller subcode. With small list sizes, \ac{SCL} decoding performs as good as (and sometimes it outperforms) belief propagation decoding. 
	Larger gains are attained by concatenating an inner product code with an outer high-rate code. For specific concatenated constructions, a performance within a few tenths of a decibel from finite length bounds at \ac{CER}$\approx10^{-7}$ is achieved.
	
	\section*{Acknowledgement}
	The authors would like to thank Gerhard Kramer, Peihong Yuan, the Associate Editor and Anonymous Reviewers  for their insightful comments that helped to significantly improve the paper.

\end{document}